\documentclass[11pt]{article}
\usepackage{amsmath,amssymb,amsthm,amsfonts,latexsym,xspace,enumerate,graphicx,color,eucal,upgreek,mathtools}
\usepackage{float}
\usepackage{balance}

\setlength{\textwidth}{6.5 in}
\setlength{\textheight}{9in}
\setlength{\oddsidemargin}{0in}
\setlength{\topmargin}{0in}
\addtolength{\voffset}{-.5in}

\newtheorem{theorem}{Theorem}[section]
\newtheorem*{namedtheorem}{\theoremname}
\newcommand{\theoremname}{testing}

\newtheorem{lemma}[theorem]{Lemma}

\newtheorem{claim}[theorem]{Claim}

\newtheorem{fact}[theorem]{Fact}

\newtheorem*{question*}{Question}

\theoremstyle{definition}
\newtheorem{definition}[theorem]{Definition}

\theoremstyle{plain}
\newtheorem{Alg}{Algorithm}

\usepackage[backref,colorlinks,bookmarks=true]{hyperref}

\newcommand{\ignore}[1]{}




\newcommand{\defeq}{\stackrel{\small \mathrm{def}}{=}}

\newcommand{\poly}{\mathrm{poly}}


\newcommand{\R}{\mathbb R}
\newcommand{\C}{\mathbb C}

\makeatletter
\floatstyle{ruled}
\newfloat{fragment}{H}{lop}
\floatname{fragment}{Algorithm}
\renewcommand{\floatc@ruled}[2]{\vspace{2pt}{\@fs@cfont \#1.\:} \#2 \par
 \vspace{1pt}}
\makeatother

\title{A Polynomial Time Algorithm for Lossy Population Recovery}
\author{Ankur Moitra \thanks{
Institute for Advanced Study and Princeton University. Email: {\tt moitra@ias.edu}.
Research supported in part 
by NSF grant No.
DMS-0835373 and by an NSF Computing and Innovation Fellowship.} \and Michael Saks \thanks{Rutgers University. Email: {\tt saks@math.rutgers.edu}. Research supported in part by 
NSF under grants CCF-083727 and CCF-1218711, and by a sabbatical at Princeton University.}}

\begin{document}
\maketitle

\begin{abstract}
We give a polynomial time algorithm for the lossy population recovery problem.  In this problem, the goal
is to approximately learn 
an unknown distribution on binary strings of length $n$ from lossy samples: for some parameter $\mu$ each coordinate of the sample is preserved with probability $\mu$ and
otherwise is replaced by a `?'.  The running time and number of samples needed for our algorithm is polynomial in $n$ and $1/\varepsilon$
for each fixed $\mu>0$. This improves on
the algorithm
of Wigderson and Yehudayoff \cite{WY} that runs in quasi-polynomial time for any fixed $\mu>0$ and the polynomial time algorithm of Dvir et al \cite{DRWY} 
which was shown to work for $\mu \gtrapprox 0.30$ by  Batman et al \cite{BIMPS}. 
In fact, our algorithm also works in the more general framework of Batman et al. \cite{BIMPS} in which there is no
a priori bound on the size of the support of the distribution.  
The algorithm we analyze is implicit in previous work \cite{DRWY,BIMPS}; our main contribution is to
analyze the algorithm by showing (via linear programming duality and connections to complex analysis) that a certain matrix associated with the problem has
a robust local inverse even though its condition number is exponentially small.  A corollary of our result is the first polynomial time algorithm for learning DNFs in the restriction access model of Dvir et al \cite{DRWY}.

\end{abstract}

\thispagestyle{empty}

\newpage

\setcounter{page}{1}

\section{Introduction}

\subsection{Background and Our Results}

The population recovery problem was introduced by Dvir et al \cite{DRWY} and also studied by Wigderson and Yehudayoff \cite{WY}. To describe this basic statistical problem, we will borrow an example from \cite{WY}: 

\begin{quote}
Imagine that you are a paleontologist, who wishes to determine the population of dinosaurs that roamed the Earth before the hypothesized meteor made them extinct. Typical observations of dinosaurs consist of finding a few teeth of one here, a tailbone of another there, perhaps with some more luck a skull and several vertebrae of a third, and rarely a near complete skeleton of a fourth....Using these fragments, you are supposed to figure out the population of dinosaurs, namely a complete description of (say) the bone skeleton of each species and the fraction of each species occupied in the entire dinosaur population.
\end{quote}

To make this precise, suppose there is an unknown distribution $\pi$ over binary strings of length $n$.
We are given samples from the following model: 

\begin{itemize}

\item Choose a string $a$ according to $\pi$

\item Replace each coordinate with a ``?" independently with probability $1 - \mu$

\end{itemize}

\noindent The goal is to reconstruct the distribution up to an additive error $\varepsilon>0$.  
We would like to output a set of strings $S$ and for each string $a$ in $S$, an estimate $\widetilde{\pi}(a)$
of $\pi(a)$ with the requirement that each of these estimates is within $\varepsilon$ of $\pi(a)$,
and for every string  $a \not\in S$, $\pi(a)$ must be at most $\varepsilon$.  This formulation of the problem is adapted from  \cite{BIMPS}; in
the original version in \cite{DRWY} the support size of the distribution (which we will denote by $k$) is also a parameter.

We remark that the maximum likelihood estimator can be computed efficiently using a convex program \cite{BRW}. Yet the challenge is in showing that few samples are needed {\em information theoretically}.  We will see another instance of this type of issue in our paper: our approach is based on the notion of a `robust local inverse' (defined later) \cite{DRWY}, which is easy to compute and the challenge is in showing that a good robust local inverse {\em exists} for any fixed $\mu > 0$.

Dvir et al \cite{DRWY} gave a polynomial time algorithm for lossy population recovery for any $\mu \gtrapprox 0.365$; their analysis was improved
by Batman, et al. \cite{BIMPS} who showed that the same algorithm works for any $\mu> 1- 1/\sqrt{2} \approx 0.30$.
Wigderson and Yehudayoff \cite{WY} gave an alternate approach based on a method termed ``partial identification" that runs in time quasi-polynomially in the support size $k$ for any fixed $\mu > 0$.
Interestingly, Wigderson and Yehudayoff \cite{WY} show that their framework cannot be used to get a polynomial time algorithm (and the number of samples needed is at least $k^{\log\log k}$). In fact, their algorithm works even in the presence of corruptions, not just erasures (whereas ours does not).  

A generalization of the population recovery problem was introduced in the seminal work of Kearns et al \cite{KMRRSS}, which they called the problem of learning mixtures of Hamming balls: Again, we choose a string $a$ according to $\pi$ but now each bit in $a$ is flipped with probability $\eta_a < 1/2$ and this probability is allowed to depend on $a$. Kearns et al \cite{KMRRSS} give algorithms for the special case in which each flip probability is the same (which is exactly the noisy population recovery problem) and their algorithms run in time exponential in the support size $k$. This is an interesting phenomenon in learning distributions, that for many problems we do not know how to achieve a running time that is sub-exponential in the number of components. For example, this is the case when learning mixtures of product distributions \cite{FOS}, learning juntas \cite{MOS}, learning decision trees \cite{EH}, and learning mixtures of Gaussians \cite{MV}, \cite{BS}. In fact, many of these problems are inter-reducible \cite{FGKP}, so with this context it is interesting that the population recovery problem is one positive example where we can avoid exponential dependence on $k$. But is there a polynomial time algorithm? 

Here we give an estimator that solves the population recovery problem, such that for any fixed $\mu>0$,
the running time and number of samples needed is polynomial in $n$ and  $1/\varepsilon$.

\begin{theorem}
\label{pop rec alg}
There is an efficient algorithm for the population recovery problem whose running time and number of samples needed is
 $O((n/\varepsilon )^{2f(\mu)} )$ where $f(\mu) = 1/\mu \log 2/\mu + O(1)$.
\end{theorem}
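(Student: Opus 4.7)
The plan is to analyze the ``robust local inverse'' algorithm implicit in \cite{DRWY, BIMPS}. The algorithm generates, via a prefix-by-prefix enumeration, a polynomial-size candidate list $\calL$ containing every string $a$ with $\pi(a) \geq \varepsilon$, and for each $a \in \calL$ outputs the estimate $\widetilde\pi(a) = \sum_c L_a(c)\,\widehat p(c)$, where $\widehat p(c)$ is the empirical frequency of the observation pattern $c \in \{0,1,?\}^n$ and $L_a$ is a designed coefficient vector. A Hoeffding bound gives a per-estimator error of $O(\|L_a\|_1/\sqrt N)$, so the target complexity $N = O((n/\varepsilon)^{2f(\mu)})$ reduces to producing, for every $a$, a vector $L_a$ with $\|L_a\|_1 \leq (n/\varepsilon)^{f(\mu)}$ such that $|(L_a M)\pi - \pi(a)| \leq \varepsilon$ for every probability distribution $\pi$. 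Here $M = M_1^{\otimes n}$ is the $3^n \times 2^n$ sampling matrix, with $M_1$ having entries $\mu$ on the two rows matching a bit and $1-\mu$ on the ``?'' row.

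My first step would be to cast the construction of $L_a$ as a linear program and then use the symmetries of $M$ (coordinate permutations composed with the bit-flip taking $a$ to $0^n$) to reduce to $n+1$ symmetric variables $\gamma_0, \dots, \gamma_n$ indexed by the number of non-``?'' coordinates. The expected value of the estimator, given that the source agrees with $a$ in exactly $h$ positions, takes the closed form
\[
R(h) \;=\; \sum_{k=0}^h \binom{h}{k}\, \gamma_k\, \mu^k (1-\mu)^{n-k},
\]
so the feasibility constraints become $R(n) = 1$ and $|R(h)| \leq \varepsilon$ for $h = 0, 1, \dots, n-1$, while the objective $\sum_k \binom{n}{k} |\gamma_k|$ equals $\|L_a\|_1$. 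In particular $R$ is a polynomial of degree at most $n$ in $h$, and we are free to choose its values on $\{0,\dots,n-1\}$ as long as they lie in $[-\varepsilon,\varepsilon]$.

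The heart of the argument is to \emph{construct} a choice of $R$ for which the resulting $\gamma_k$'s satisfy the claimed weighted $\ell_1$ bound. Dualizing the LP rephrases this as an extremal polynomial question: whether every degree-$\leq n$ polynomial that is large at a distinguished point and bounded on $[0,1-\mu]$ has large coefficients in a specific basis. I would certify feasibility from the primal side by taking a rescaled Chebyshev-type polynomial of degree $n$ that equals $1$ at the point corresponding to $h=n$ and is bounded by $\varepsilon$ on $[0, 1-\mu]$. To control $\|L_a\|_1$, I would invert the transform via $\gamma_k = \sum_{j \leq k} (-1)^{k-j} \binom{k}{j} R(j) / [\mu^k (1-\mu)^{n-k}]$ and apply Cauchy's integral formula on a complex circle whose radius is tuned to $\mu$: each $|\gamma_k|$ is then bounded by the maximum modulus of an analytic extension of $R$ on that circle, divided by a power of the radius. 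Optimizing the radius produces the stated exponent $f(\mu) = (1/\mu)\log(2/\mu) + O(1)$.

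The principal obstacle is the huge disparity between the condition number $\mu^{-n}$ of $M_1^{\otimes n}$ and the polynomial $\|L_a\|_1$ we must achieve. Earlier approaches either give quasi-polynomial dependence on $n$ (as in \cite{WY}) or require $\mu$ above some threshold (as in \cite{DRWY,BIMPS}), essentially because a polynomial-modulus estimate on the real interval is too lossy. The new ingredient is the analytic estimate on a carefully chosen complex circle, which mediates between the growth of $\binom{n}{k}/(1-\mu)^{n-k}$ in the objective and the $\mu^{-k}$ factor in the M\"{o}bius inversion. Making this balance tight---so that the worst $k$ contributes $(n/\varepsilon)^{f(\mu)}$ rather than $\mu^{-n}$---is the technical heart of the proof.
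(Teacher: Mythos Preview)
Your reduction to the symmetrized LP in the variables $\gamma_0,\dots,\gamma_n$ and the formula for $R(h)$ are correct and match the paper's setup (up to reindexing). The divergence comes at the step where you actually bound the LP optimum, and there your proposal has a real gap.

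The paper does \emph{not} construct a primal witness. It passes to the dual, which after a Vandermonde change of basis becomes the problem of maximizing $p(0)-\varepsilon\|p\|_1$ over polynomials $p$ subject to $\|q\|_1\le 1$, where $q(x)=p(1+\mu(x-1))$. The $\ell_1$ norms are relaxed to sup norms over the complex unit disk and the disk $D_\mu(1-\mu)$, and the bound is obtained from the Hadamard three--circle theorem applied after a M\"obius transformation. This complex-analytic ``uncertainty principle'' is the entire technical content yielding the exponent $f(\mu)=(1/\mu)\log(2/\mu)$; nothing like it appears in your plan.

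Your proposed primal route has two concrete problems. First, the $\gamma_k$ are (up to the factor $\mu^{-k}(1-\mu)^{-(n-k)}$) the \emph{forward differences} $\Delta^k R(0)=\sum_{j\le k}(-1)^{k-j}\binom{k}{j}R(j)$, not Taylor coefficients of $R$; Cauchy's estimate $|a_k|\le M(r)/r^k$ does not bound them, and the contour-integral representation that \emph{does} apply (N\"orlund--Rice) involves the kernel $1/\bigl(z(z-1)\cdots(z-k)\bigr)$ and gives nothing useful without much more work. Second, the interval $[0,1-\mu]$ you invoke has no role in your primal variables: $R$ is constrained at the integer points $0,\dots,n-1$, not on $[0,1-\mu]$; that interval (really $[1-2\mu,1]$ or the disk $D_\mu(1-\mu)$) arises only in the paper's \emph{dual} after the affine change of variables. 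Explicit primal constructions of exactly the kind you sketch are what \cite{DRWY} and \cite{BIMPS} did, and those analyses stall at $\mu\gtrapprox 0.30$; the paper's point is precisely that one must switch to the dual and use three circles to get past this barrier. As written, your plan does not supply the missing idea.
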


The population recovery problem arose naturally from the investigation of one of the central problems in
learning theory, 
learning  DNFs.  The best known algorithm for learning DNFs in Valiant's PAC learning model \cite{V} runs in time (roughly) $2^{n^{1/3}}$ \cite{KS}. 
Dvir et al \cite{DRWY} introduced a new model called restriction access that can be thought of as an interpolation between black box and white box access to the function: Each example consists of a restriction of the unknown DNF obtained by fixing a random $1-\mu$ 
fraction of the input variables.. Dvir et al \cite{DRWY} showed how to reduce the problem of learning an
$n$ term, $k$-variable DNF  to solving an instance of the population recovery problem on strings of length $n$ and support size $k$.  Previous
 algorithms for population recovery yield a polynomial time algorithm for learning DNFs in the restriction access model for any $\mu > 1-1/\sqrt{2}$ \cite{DRWY,BIMPS}, and Wigderson and Yehudayoff \cite{WY} obtain a quasi-polynomial time algorithm that runs in time $k^{\log k}$ (where $k$ is the number of clauses). Combining the reduction of \cite{DRWY} with
Theorem \ref{pop rec alg} immediately gives:

\begin{theorem}
\label{dng alg}
There is an efficient algorithm to PAC learn DNFs in the restriction access model for any $\mu > 0$. The running time and number of samples needed is $O((n/\varepsilon)^{2f(\mu)} \poly(n,k))$ where $f(\mu) = 1/\mu \log 2/\mu$ and the algorithm succeeds with high probability. 
\end{theorem}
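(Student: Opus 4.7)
The plan is to prove Theorem \ref{dng alg} as a direct corollary of Theorem \ref{pop rec alg} by invoking the black-box reduction from DNF learning in the restriction access model to the population recovery problem established by Dvir et al.\ \cite{DRWY}. Recall their reduction: given a random restriction example of a DNF $\phi$ with $k$ terms over $n$ variables, in which each variable is left unset with probability $\mu$ and otherwise fixed to a uniformly random bit, one can (in polynomial time) produce a lossy sample with retention parameter $\mu$ from an associated distribution $\pi_\phi$ on $\{0,1\}^n$ supported on at most $k$ strings that encode the terms of $\phi$. The correspondence is such that an $\varepsilon'$-accurate reconstruction of $\pi_\phi$ (in the sense of our population recovery guarantee: a set $S$ of strings with per-string additive estimates within $\varepsilon'$, and $\pi_\phi(a)\le\varepsilon'$ off $S$) can be post-processed in time $\poly(n,k)$ to output a DNF $\widetilde{\phi}$ that $\varepsilon$-approximates $\phi$, provided $\varepsilon'$ is at most $\varepsilon$ divided by some $\poly(n,k)$ factor.

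The algorithm is then the obvious pipeline: draw the required number of restriction examples, convert each to a lossy sample as in \cite{DRWY}, feed the resulting lossy samples into the population recovery algorithm of Theorem \ref{pop rec alg} run with accuracy $\varepsilon' = \varepsilon / \poly(n,k)$, and finally apply the reverse mapping of \cite{DRWY} to the recovered distribution to produce $\widetilde{\phi}$. Correctness follows immediately from the correctness of the reduction combined with the correctness of Theorem \ref{pop rec alg}. For the running time, Theorem \ref{pop rec alg} takes $O((n/\varepsilon')^{2f(\mu)})$ time on inputs of length $n$; since $\varepsilon' = \varepsilon/\poly(n,k)$ this becomes $O((n/\varepsilon)^{2f(\mu)}\poly(n,k))$ after absorbing the polynomial factor into the exponent's base (which only changes $f(\mu)$ by the $O(1)$ slack already present in Theorem \ref{pop rec alg}). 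Adding the $\poly(n,k)$ overhead of the reduction itself yields the claimed bound.

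Essentially no new ideas are required: the content is entirely in Theorem \ref{pop rec alg} together with the reduction already in the literature. The only step that needs care is the bookkeeping translating the PAC learning accuracy $\varepsilon$ into the accuracy $\varepsilon'$ demanded from the population recovery subroutine; this is a routine exercise that I would verify by quoting the explicit loss from \cite{DRWY} and checking that it is at most polynomial in $n$ and $k$, which is the only point where the $\poly(n,k)$ factor in the stated running time enters.
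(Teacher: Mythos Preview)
Your proposal is correct and matches the paper's approach exactly: the paper itself gives no formal proof of Theorem~\ref{dng alg}, simply stating that it follows immediately by combining the reduction of \cite{DRWY} with Theorem~\ref{pop rec alg}. If anything, you supply more detail than the paper does. (One cosmetic point: the paper's phrasing is ``$n$ term, $k$-variable DNF'' reducing to strings of length $n$ with support size $k$, whereas you write ``$k$ terms over $n$ variables''; this does not affect the argument.)
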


\noindent The main open question in this paper is whether there is a polynomial time algorithm for {\em noisy} population recovery (see Section~\ref{sec:open}). If the goal was to learn a distribution that is close to the true distribution (rather than more stringent goal of learning its parameters), the maximum likelihood estimator would suffice. But the main obstacle here is showing that for two distributions whose parameters do not match, their statistical distance is noticeably large.

\subsection{The Robust Local Inverse}
\label{RLI}
At a more philosophical level, what makes the population recovery problem particularly interesting is that in order to give an efficient estimator we need to solve a certain inverse problem despite the fact that the corresponding matrix has many exponentially small eigenvalues. 
As will be reviewed in the next section, Dvir et al \cite{DRWY} showed that  the population
recovery problem can be reduced to a problem of the following form:  We have two unknown probability distributions
$\pi$ and $\phi$ over the domain $\{0,\ldots,n\}$, which when viewed as vectors indexed by
$\{0,\ldots,n\}$ are related by the equation:
$$\phi^T=\pi^TA$$
where $A$ is a known (row) stochastic invertible matrix indexed by $\{0,\ldots,n\}$.
We want to estimate $\pi(0)$ but we only have access to samples chosen according to the
distribution $\phi$.  We would like the running time and number of samples needed to be (at most) a polynomial in $n$ and $1/\varepsilon$. 

Let $u$ denote the first column of $A^{-1}$. Then: $$\pi(0)=\phi^T u$$
So, if we knew the vector $\phi$ {\em exactly}, we could use it to recover $\pi(0)$ exactly.
But we do not know $\phi$. We can  estimate $\phi$ from random random samples in the obvious way: let $\widetilde{\phi}(j)$ be the fraction of observed samples having exactly $j$ zero entries.
We might then hope that $\widetilde{\pi}(0)=\widetilde{\phi}^T u$ is a good estimate to $\pi(0)$.  We will refer to this
as the {\em natural estimator} for $\pi(0)$.
The 
error $|(\widetilde{\phi}-\phi)^T u|$ of this estimator is at most  $n\|\widetilde{\phi}-\phi\|_{\infty} \|u\|_{\infty}$. 
Thus to obtain estimation error  $\varepsilon$, it is enough that

\[
\|\widetilde{\phi}-\phi\|_{\infty} \leq \frac{\varepsilon}{n\|u\|_{\infty}}
\]
and the Chernoff-Hoeffding bound says that  $C\log(n)(\|u\|_{\infty}n /\varepsilon)^2$  samples are enough so that
the probability of exceeding the desired error is less than $e^{-C}$.

To ensure that this is not too many samples, we want that $\|u\|_{\infty}$ is polynomially bounded.
For the estimation problem that is derived from population recovery, it turns out that $\|u\|_{\infty} \leq 1$ provided that $\mu \geq 1/2$
and so in this case the natural estimator yields an accurate estimate from a polynomial number of samples. But if $\mu < 1/2$, then $\|u\|_{\infty}$ is
exponentially large in $n$ and this estimator requires exponentially many samples to be at all accurate.

What else can we do? The vector $u$ has entries that are too large, so Dvir et al \cite{DRWY} suggested replacing $u$ by another vector $v$ whose entries are not too
large and such that $\pi^TA v $ is close to $\pi^TA u$
for all distributions $\pi$.  Remarkably, Dvir et al \cite{DRWY} managed to construct such a $v$ which works for $\mu  \gtrapprox .365$ (the analysis was subsequently improved to $\mu \gtrapprox .3$ \cite{BIMPS}), which in turn yields a polynomial time algorithm
for the population recovery problem even in cases when the natural estimator fails! 

Since $\pi^T A u = \pi(0)$, it follows that what we really want is to find a vector $v$ so that
$$\|A v - e_0\|_{\infty} \leq \varepsilon$$
where $e_0$ is the indicator vector for zero (i.e. its first entry is one and the rest are all zero). And furthermore we want $\|v\|_\infty$ to be as small as possible. 
A vector satisfying the above condition is called a  {\em $\varepsilon$-local inverse} for $A$ at $e_0$, and we will refer to $\|v\|_{\infty}$ as the
{\em sensitivity} of $v$.    
If we can find a $v$ whose sensitivity is at most $\sigma$, then $poly(n,1/\varepsilon,\sigma)$ samples
suffice to get an estimate ($\widetilde{\phi}^T v$) to $\pi(0)$ that is within an additive $\varepsilon$.

Geometrically, a local inverse is obtained by taking $A^{-1}e$ where $e$ is a small  perturbation of the vector $e_0$, which
is chosen so that  $A^{-1}e$ has small norm even though $A^{-1} e_0$ does not. What controls the  behavior of
$A^{-1}e$ is the representation of  $e$ in the basis of singular vectors of $A$.  In choosing $e$ we want to
remove from $e_0$ the components corresponding to tiny singular values, which will ensure that the sensitivity of $v=A^{-1}e$ is not too large.
We are hoping that that the weight on these deleted components is small so that
the result is a good local inverse.

The problem of finding the $\varepsilon$-local inverse of minimum sensitivity for a particular matrix $A$ can be expressed directly as a linear program whose variables are the vector $v$ and the sensitivity $\sigma$:
\begin{eqnarray}
\label{LP}
& \min \sigma & \\
\nonumber
Av & \geq & e_0-\varepsilon \bold{1} \\
\nonumber
 -Av &  \geq & -e_0-\varepsilon \bold{1}\\
\nonumber 
v+\sigma \bold{1} & \geq & \bold{0}\\
\nonumber 
-v + \sigma \bold{1} & \geq & \bold{0}
\end{eqnarray}
The solution $v$ can be used in to estimate $\pi(0)$ from $\widetilde{\phi}$, where the number of samples depends on $\sigma$, as above. Note that the matrix $A$ depends on $\mu$. Our main contribution is to prove that there is a good solution to the above linear program for any $\mu > 0$.

The approach in Dvir et al \cite{DRWY} and in Batman et al \cite{BIMPS} was to guess a solution to the above linear program and bound its sensitivity. Instead, we consider the dual (maximization) problem
and prove an upper bound on its maximum.    
After some work, the dual problem becomes a problem of finding a polynomial $p$ of degree $n$ so as to  maximize $p(0)-\varepsilon\|p\|_1$, where $\|\cdot\|_1$ denotes the sum of the absolute values of the coefficients, 
subject to the constraint that the translated polynomial $q(x)=p(1+(x-1)/\mu)$ has $\|q\|_1=1$.  Bounding this maximum from above is then reduced to a problem of
showing that if $p$ is a polynomial (indeed any holomorphic function)
on the complex plane and there exists a disk of nontrivial diameter where $|p(z)|$ is much smaller than $|p(0)|$ then
the maximum of $p(z)$ on the unit circle must be much larger than $|p(0)|$.  This final result can be viewed as a kind of uncertainty principle 
and is  proved using tools from complex analysis (the Hadamard 3-circle theorem, and the M\"obius transform).

\section{Reductions for Population Recovery}\label{sec:rli}

Here  we describe (informally) the reduction of Dvir, et al. \cite{DRWY} from the population recovery problem to the problem of constructing
a robust local inverse for a certain matrix $A$ (whose entries depend on $\mu$): Recall that if we choose a string $a \in \{0,1\}^n$ (according to $\pi$), the
observation is a (random) string in $\{0,1,?\}^n$ obtained from $a$ by
 replacing each $a_i$ with `?' independently with probability $1-\mu$.  
 
 The first observation of Dvir et al \cite{DRWY} is that we may as well assume that we know all of the strings $a$ whose probability $\pi(a)$ is at least $\Omega(\varepsilon)$. 
Of course, in the population recovery both the strings and their probabilities are unknown, so how can we
 reduce from the case when everything is unknown to the case where at least the set of strings with large probability is known? 
Suppose we ignore all but the first $n'$ coordinates; then we get an instance of the population recovery problem on length $n'$ strings. In particular, the
probability $\pi(a')$ of a length $n'$ string is the total probability of all length $n$ strings $a$ whose first $n'$ coordinates are exactly $a'$. 
Now the rough idea is that we can incrementally solve the population recovery problem on longer and longer prefixes, each time we increase the length of the
prefix by one we at most double the number of candidate strings. The crucial insight is that we can always prune the set of strings because we never need to
keep a prefix whose total probability is less than $\varepsilon$. 

The second observation of Dvir et al \cite{DRWY} is that if all the strings are known, then it suffices to estimate $\pi(0)$ within an additive $\varepsilon$. This type of reduction
is standard: given a string $a$, we can take each observation and XOR it with $a$ but keeping the symbol `?' unchanged. The samples we are given can be thought of
as samples from an instance of population recovery where every string is mapped to its XOR with $a$, and so we can recover $\pi(a)$ by finding the probability of the all zero
string in this new instance of the problem. 

The final simplification is: suppose we ignore the locations of the ones, zeros and question marks in the samples but only recover the number of ones. Then we can map the probability
distribution $\pi$ to a length $n + 1$ vector where $\pi(i)$ is the total probability of all strings with exactly $i$ ones. What is the probability that we observe $j$ ones (and the remaining symbols are
zeros or question marks) given that the sample $a$ had $j$ ones? This quantity is exactly:
 $$
A_{i, j} = {i \choose j} \mu^{j} (1-\mu)^{i-j}
 $$
 So if we only count the number of ones in each observation, we are given random samples from the distribution $\pi^T A$. Hence, if our goal is to recover the probability $\pi(0)$ assigned to the all zero string, and we ignore {\em where} the zeros, ones and question marks occur in our samples, we are faced with a particular matrix $A$ (whose entries depend on $\mu$) for which we would like to construct a robust local inverse. 
 
 \begin{definition}
 Let $\sigma_n(\mu,\varepsilon)$ denote the minimum sensitivity of a $\varepsilon$-local
inverse (i.e. the optimum value of (\ref{LP})). 
 \end{definition}
 
The following family of vectors will play a crucial role in our analysis: 
$$v^{\alpha}= \Big [1, \alpha, \alpha^2, ... \alpha^{n-1}\Big ]$$
Then it can be checked that setting $\alpha = -\frac{1-\mu}{\mu}$ is the natural estimator (i.e. $v^{\alpha} = A^{-1}e_0$) and the sensitivity of this estimator is exponentially large
for $\mu < 1/2$. We prove:

\begin{theorem}
\label{sigma bound}
For all positive integers $n$ and $\mu,\varepsilon >0$ we have $\sigma_n(\mu,\varepsilon) \leq (1/\varepsilon)^{f(\mu)}$
where $f(u)=\frac{1}{\mu} \log \frac{2}{\mu}$.
\end{theorem}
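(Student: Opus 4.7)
The strategy is to upper bound $\sigma_n(\mu,\varepsilon)$ via LP duality and a complex-analytic estimate on polynomials. Writing \eqref{LP} in standard form and taking its dual, I introduce multipliers $\alpha,\beta\ge 0$ for the two $\pm Av$ constraints and $\gamma,\delta \ge 0$ for the two $\pm v$ constraints. Substituting $w = \alpha - \beta$ (and observing that the coupling constraint $\norm{\gamma}_1 + \norm{\delta}_1 = 1$ can always be met, given any $w$ with $\norm{A^T w}_1 \le 1$, by padding $\gamma,\delta$ by a common positive vector), the dual collapses to
\begin{equation*}
\max_{w \in \R^{n+1}}\; w_0 - \varepsilon \norm{w}_1 \quad \text{subject to}\quad \norm{A^T w}_1 \le 1.
\end{equation*}
Interpreting $w$ as the coefficient vector of a polynomial $p(x) = \sum_{i=0}^n w_i x^i$, the identity $(1-\mu+\mu x)^i = \sum_j \binom{i}{j}\mu^j(1-\mu)^{i-j}x^j$ shows that $A^T w$ is precisely the coefficient vector of $q(x) := p(1-\mu+\mu x)$. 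So the dual is the polynomial extremal problem: maximize $p(0) - \varepsilon\norm{p}_1$ over polynomials $p$ of degree at most $n$ with $\norm{q}_1 \le 1$.

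Next I translate the two coefficient-norm conditions into pointwise complex bounds. Since $|q(x)| \le \norm{q}_1 \le 1$ for $|x|\le 1$, the constraint on $q$ is equivalent to $|p(z)|\le 1$ on the closed disk $D := \{z : |z-(1-\mu)|\le \mu\}$, which is internally tangent to the unit circle at $z=1$. Conversely, the elementary estimate $|p(z)| \le \sum_i |p_i||z|^i \le \norm{p}_1$ for $|z|\le 1$ gives $\norm{p}_1 \ge M := \max_{|z|=1}|p(z)|$. So it suffices to upper bound $|p(0)|-\varepsilon M$ under the geometric constraints $|p|\le 1$ on $D$ and $|p|\le M$ on the unit circle.

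The heart of the proof is to compare $|p(0)|$ with $M$. If $\mu \ge 1/2$ then $0 \in D$, so $|p(0)|\le 1$ directly; assume $\mu < 1/2$. A direct three-circle argument fails here because $D$ and the unit disk are internally tangent, and no M\"obius transformation can make two tangent circles concentric. The key is to use the \emph{parabolic} M\"obius transform $w = 1/(1-z)$ that fixes the tangent point $z=1$ (sending it to $\infty$): a short calculation shows it maps the unit disk onto the half-plane $\{\mathrm{Re}\,w > 1/2\}$, maps $D$ onto the sub-half-plane $\{\mathrm{Re}\,w > 1/(2\mu)\}$, and sends $z=0$ to $w=1$. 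The function $g(w) := p(1-1/w)$ is holomorphic and bounded on the vertical strip $\Sigma = \{1/2 \le \mathrm{Re}\,w \le 1/(2\mu)\}$ with $|g|\le M$ on the left edge and $|g|\le 1$ on the right edge. Since $w=1 \in \Sigma$, Hadamard's three-line theorem applied at $w=1$ gives
\begin{equation*}
|p(0)| \;=\; |g(1)| \;\le\; M^{\alpha}, \qquad \alpha \;:=\; \frac{1-2\mu}{1-\mu} \in (0,1).
\end{equation*}

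Plugging this back in, the dual objective is at most $M^{\alpha} - \varepsilon M$, which maximized over $M\ge 0$ (elementary calculus at $M^{*} = (\alpha/\varepsilon)^{1/(1-\alpha)}$) equals $(1-\alpha)(\alpha/\varepsilon)^{\alpha/(1-\alpha)} \le (1/\varepsilon)^{\alpha/(1-\alpha)}$. A direct computation yields $\alpha/(1-\alpha) = (1-2\mu)/\mu$, and the elementary inequality $\log(2/\mu) \ge 1-2\mu$ on $(0,1]$ gives $(1-2\mu)/\mu \le (1/\mu)\log(2/\mu) = f(\mu)$. Strong LP duality (the primal is finite, with the natural estimator $A^{-1}e_0$ feasible) then yields $\sigma_n(\mu,\varepsilon) \le (1/\varepsilon)^{f(\mu)}$. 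The main obstacle in this plan is the third step: selecting the right conformal change of variables. Because the two boundary circles are tangent rather than disjoint, the usual three-circle machinery does not apply, and the trick is to use a parabolic M\"obius transformation anchored at the tangent point, which converts the geometry into two parallel vertical lines where Hadamard's three-line theorem applies cleanly.
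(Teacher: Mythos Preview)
Your proof is correct and arrives at the same polynomial extremal problem as the paper (the paper reaches it via a Vandermonde change of basis before dualizing, you by dualizing directly and identifying $A^Tw$ with the coefficient vector of $q$; the endpoints coincide). The genuine divergence is in the complex-analytic step. The paper precomposes with the hyperbolic M\"obius map $\phi_\beta(x)=(\beta+x)/(1+\beta^* x)$, which sends $C_1$ to itself and squeezes a circle of radius $\rho/2$ into the small disk $D_\rho(\beta)$; it then applies the Hadamard three-\emph{circle} theorem on the concentric family $C_{\rho/2},\,C_{|\beta|},\,C_1$. The factor-of-two loss in the inner radius is exactly what produces the $\log(2/\mu)$ in the exponent $f(\mu)$. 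You instead exploit the fact that $D_\mu(1-\mu)$ and $D_1$ are internally \emph{tangent} at $z=1$: the parabolic map $w=1/(1-z)$ fixing the tangent point sends the two circles to the parallel lines $\mathrm{Re}\,w=1/2$ and $\mathrm{Re}\,w=1/(2\mu)$, so the three-\emph{line} theorem applies with no slack and yields the sharper exponent $(1-2\mu)/\mu$, which you then dominate by $f(\mu)$ via the elementary inequality $\log(2/\mu)\ge 1-2\mu$. Your route is cleaner for this specific geometry and in fact strictly improves the exponent; the paper's route is slightly more robust in that it would still work if the small disk were not tangent to the unit circle. One small point worth making explicit in a full write-up: the three-line theorem needs $g$ bounded on the closed strip, which holds here because the strip corresponds under $z=1-1/w$ to a subset of $\overline{D_1}$ where $|p|\le M$ by the maximum principle, and $g$ extends continuously to $w=\infty$ since $p$ is continuous at $z=1$.
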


Theorem \ref{pop rec alg} follows since as discussed in Section \ref{RLI}, the number of samples we 
need to obtain the desired approximation with high probability when using the best local
inverse is $\sigma_n(\mu,\varepsilon)^2 \poly(n,1/\varepsilon)$.

\section{A Transformed Linear Program}

As outlined earlier,  the problem of finding an $\varepsilon$-local inverse can be expressed
as a linear programming problem whose objective is to minimize the sensitivity.  We want to prove an upper bound
on the value of the solution, and we will accomplish this by instead bounding the maximum objective function of the dual. 

However, before passing to the dual we will apply a crucial change of basis to the linear program.  The reason we do this is so that the dual can then be interpreted
as a certain maximization problem over degree $n$ polynomials. 
We will choose $n+1$ values $\alpha_0, \alpha_1, \ldots, \alpha_n$ (as we'll see the particular values  won't matter) 
and we will consider the estimators $v^{\alpha_i}$ defined in the previous section. We will abuse notation and refer to this estimator
as $v^i$. Since this family forms a basis, we can write any local inverse $v$ in the form $v = \sum_{i=0}^n \lambda_i v^i$.
Let $V$ be the columns $v^0,\ldots,v^n$  and let $B=AV$. Then our new linear program is:
\begin{eqnarray}
\label{LP}
& \min \sigma & \\
\nonumber
B \lambda &  \geq & e_0-\varepsilon \bold{1}\\
\nonumber
-B \lambda & \geq & - e_0 +\varepsilon \bold{1} \\
\nonumber 
V \lambda + \sigma \bold{1} & \geq & \bold{0} \\
\nonumber 
-V\lambda + \sigma \bold{1} & \geq & \bold{0}\\
\nonumber 
\sigma & \geq & 0
\end{eqnarray}
The final constraint is superfluous, but is helpful in formulating the dual linear program.

The coefficient matrix $V$ is a Vandermonde matrix (i.e. each column has the form $v^{\alpha}$ for some $\alpha$)
with the entry in row $i$ and column $j$ given by
$V_i^j=(\alpha_j)^i$ (with $V_0^0=1$).  In fact, it turns out that $B$ is also a Vandermonde matrix whose $j^{th}$ column
is exactly $v_{1+\mu(\alpha_j-1)}$:
 $$B_i^j = \sum_{k \leq i} {i \choose k} \mu^k (1-\mu)^{i-k} (\alpha_j)^k= (1-\mu)^i \sum_{k \leq i} {i \choose k} \Big (\frac{\alpha_j \mu}{1 - \mu} \Big )^k = (1-\mu)^i (1 + \frac{\alpha_j \mu}{1 - \mu})^i = (1 + \mu(\alpha_j-1))^i$$ 
Indeed, this simple form for $B$ is precisely the reason we chose this basis transformation.

The new linear program has $n+2$ variables and $4(n+1)$ constraints (consisting of four groups of $n+1$ constraints each)
so the dual will have $4(n+1)$ variables consisting of four vectors, denoted by $p^+,p^-,q^-,q^+$ each
indexed by  $\{0,\ldots,n\}$.The resulting dual program is:
\begin{eqnarray}
\label{DLP}
 \max p^+_0-p^-_0-\varepsilon \sum_i p^+_i+p^-_i && \\
\nonumber
(p^+ - p^-)^T B  + (q^- -q^+) V &  = & \bold{0}\\
\nonumber
\sum_i (q^+_i +q^-_i) & \leq  & 1\\
\nonumber 
p^+,p^-,q^+,q^- &\geq& \bold{0} 
\end{eqnarray}

We can now make some simplifying observations.
If for any $i$, both $p^+_i$ and $p^-_i$ are positive, we can decrease them
each by their minimum without violating the constraints, and only increasing the objective function.  
So we may assume that at least one of them is zero.
Similarly for $q^+_i$ and $q^-_i$.  Then we can define $p=p^+ - p^-$ and $q=q^+-q^-$ to simplify the dual linear program
to:
\begin{eqnarray}
\label{DLP1}
 \max p_0-\varepsilon \sum_i |p_i| && \\
\nonumber
p^T B & = & q^T V \\
\nonumber
\sum_i |q_i| & \leq  & 1.
\end{eqnarray}

Define the polynomials $p(x)=\sum_{j=0}^n p_j x^j$ and $q(x)=\sum_{j=0}^n q_j x^j$.
The equality constraint gives $n+1$ equations indexed from 0 to $n+1$
where $j^{th}$ constraint is that $p(1+\mu(\alpha_j-1))=q(j)$.  Since $p(1+\mu(x-1))$ and $q(x)$ agree
on $n+1$ values they must be the same polynomial.    This leads to the following formulation:
\begin{quote}
The optimal sensitivity $\sigma_n(\mu,\varepsilon)$ is equal to the maximum of $p(0)-\varepsilon \|p\|_1$ over all
degree $n$ polynomials for which the translated polynomial
$q(x)=p(1+\mu(x-1))$
satisfies $\|q\|_1 \leq 1$.
\end{quote}
Recall that $\| p \|_1$ denotes the sum of the absolute values of the coefficients.

So now our goal is to prove an upper bound on the maximum of this linear program.  
 We can think of this as trying to show a type of {\em uncertainty principle} for the coefficients of a polynomial when applying an affine change of variables. There is a considerable amount of literature on establishing uncertainty principles for functions and their Fourier transforms (see e.g. \cite{DS}), but there seems to be no literature concerning other affine changes of variables (i.e. $p(1+\mu(x-1)) = q(x)$). In fact, here we will establish such an uncertainty principle via the Hadamard three circle theorem in complex analysis.

\section{Sup Relaxations}

The quantities $\|p\|_1$ and $\|q\|_1$ are unwieldy - e.g. given just the graph of the polynomial, what can we say about its coefficients? Here we will relax constraints on $\|p\|_1$ by instead considering the maximum of the polynomial over certain domains.

\begin{definition}
({\em Restricted $\sup$-norm}.)
For a subset $W$ of $\R$, let $\|q\|_{\sup}^{W} \defeq \sup_{x \in W} |q(x)|$. 
\end{definition}

Recall that we used the notation $\|q\|_1$ to denote the sum of the absolute values of the coefficients of $q$. Then it is easy to see that:

\begin{claim}
$\|q\|_1 \geq \|q\|_{\sup}^{[-1,1]}$
\end{claim}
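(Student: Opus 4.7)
The plan is to prove this by the triangle inequality applied pointwise, using the fact that every monomial $x^j$ is bounded by $1$ on the interval $[-1,1]$.

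First, I would fix an arbitrary $x \in [-1,1]$ and expand $q(x) = \sum_{j=0}^n q_j x^j$. By the triangle inequality,
\[
|q(x)| = \Bigl|\sum_{j=0}^n q_j x^j\Bigr| \leq \sum_{j=0}^n |q_j|\,|x|^j.
\]
Since $|x| \leq 1$, we have $|x|^j \leq 1$ for every $j \geq 0$, so the right-hand side is at most $\sum_{j=0}^n |q_j| = \|q\|_1$.

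Thus $|q(x)| \leq \|q\|_1$ for every $x \in [-1,1]$. Taking the supremum over $x \in [-1,1]$ on the left yields $\|q\|_{\sup}^{[-1,1]} \leq \|q\|_1$, which is the desired inequality.

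There is no real obstacle here; the only thing to note is that equality need not hold, since cancellations among the terms $q_j x^j$ can make $|q(x)|$ strictly smaller than $\|q\|_1$ even at $x = \pm 1$. The bound is one-way and is exactly the kind of relaxation the section's heading advertises: we are trading the tight combinatorial quantity $\|q\|_1$ for an analytically tractable sup-norm, at the cost of an inequality.
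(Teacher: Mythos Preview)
Your proof is correct and follows essentially the same approach as the paper: bound $|q(x)|$ by $\sum_j |q_j||x|^j \leq \sum_j |q_j| = \|q\|_1$ for each $x \in [-1,1]$, then take the supremum. The paper's proof is just a one-line version of exactly this argument.
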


\begin{proof}
For each $x \in [-1,1]$,
$\sum_{i = 0}^n |q_i| |x|^i \leq \sum_{i = 0}^n |q_i| = \|q(x)\|_1$.
\end{proof}

In the polynomial formulation of $\sigma_n(\mu,\varepsilon)$ replacing the objective
function by $p(0)-\varepsilon \|p\|_{\sup}^{[-1,1]}$ only increases the value of the objective function.
Similarly, replacing the constraint $\|q\|_1 \leq 1$
by $\|q\|_{\sup}^{[-1,1]} \leq 1$ can only increase the objective function.  
Since $q(x)=p(1+\mu(x-1))$ and the transformation $x \longrightarrow 1+\mu(x-1)$ maps
the interval $[-1,1]$ to the interval $[1-2\mu,1]$ we have $\|q\|_{\sup}^{[-1,1]}=\|p\|^{[1-2\mu,1]}$.
This leads to a relaxation of the polynomial formulation:

\begin{quote}
The optimal sensitivity $\sigma_n(\mu,\varepsilon)$ is at most  the maximum of  $p(0)-\varepsilon \|p\|_{\sup}^{[-1,1]}$
over all degree $n$ polynomials $p(x)$ for which $\|p\|_{\sup}^{[1-2\mu,1]} \leq 1$.
\end{quote}

For this relaxation to be useful to us we will need to prove that the new objective
function can not be too large if $p$ satisfies
the constraints of the relaxation.
Informally, we will say that a polynomial is bad if it satisfies the constraints of the relaxation
and makes the objective function very large. If $\mu \geq 1/2$ then $0 \in [1-2\mu,1]$ 
and so $|p(0)| \leq 1$ and no polynomial can be bad.  So assume $\mu < 1/2$.
A bad polynomial 
must be bounded  between $-1$ and $1$ on the interval $[1-2\mu,1]$,  must be very large at the origin, 
and must have $|p(x)|$  at most  $|p(0)|/\varepsilon$ for all $x \in [-1,1]$.

Is there any polynomial that satisfies these conditions? Unfortunately for this approach, there is.  The polynomial
$(1-x^2)^{n/2}$ has its maximum on $[-1,1]$ at the origin, where it is $1$, and its maximum on $[1-2\mu,1]$ is at $1-2\mu$
where its value is $C=(2\mu-\mu^2)^{n/2}$ which is exponentially small.  Thus the polynomial $p(x)=\frac{1}{C}(1-x^2)^{n/2}$
satisfies the constraints and has objective function value that is exponentially large
in $n$.

To salvage this approach we move to complex numbers. The definition of the restricted $\sup$-norm  
extends directly to subsets $W$ of the complex numbers.
For $\beta \in \C$ and positive real number $\gamma$ let $D_{\gamma}(\beta)$ be the closed disk in the complex plane
of radius $\gamma$ centered at $\beta$.   Let $C_{\gamma}(\beta)$ be the circle bounding
$D_{\gamma}(\beta)$.   If $\beta=0$ we write simply $D_{\gamma}$ and $C_{\gamma}$.
As with

\begin{claim}
$\|q(x)\|_1 \geq \|q(x)\|_{\sup}^{D_1}$.
\end{claim}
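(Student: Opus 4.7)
The plan is to mimic the earlier proof of the real-interval version of the claim essentially verbatim, replacing the interval $[-1,1]$ by the complex disk $D_1$. The only property of $[-1,1]$ used there was that $|x| \leq 1$, which also holds for every $z \in D_1$, so the triangle inequality gives what we want.

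More concretely, write $q(x) = \sum_{i=0}^n q_i x^i$ and fix an arbitrary $z \in D_1$, i.e.\ $|z| \leq 1$. Then by the triangle inequality
\[
|q(z)| \;=\; \Bigl| \sum_{i=0}^n q_i z^i \Bigr| \;\leq\; \sum_{i=0}^n |q_i|\,|z|^i \;\leq\; \sum_{i=0}^n |q_i| \;=\; \|q\|_1,
\]
where the second inequality uses $|z|^i \leq 1$. Taking the supremum over $z \in D_1$ yields $\|q\|_{\sup}^{D_1} \leq \|q\|_1$, which is the claim.

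There is no real obstacle here; the step is purely the triangle inequality and the bound $|z|^i \leq 1$ on the closed unit disk. The point of stating the claim is to set up the use of complex-analytic tools (Hadamard three-circles, M\"obius transforms) on the disk $D_1$ rather than the real interval $[-1,1]$, since the bad example $(1-x^2)^{n/2}$ that defeated the real relaxation will no longer satisfy the disk version of the constraint. This claim itself is just the routine observation that relaxing the coefficient-sum bound to a supremum over $D_1$ is a legitimate relaxation.
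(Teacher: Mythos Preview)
Your proof is correct and is exactly the argument the paper intends: it simply repeats the triangle-inequality computation from the $[-1,1]$ claim with $z\in D_1$, using only $|z|\le 1$. The paper in fact does not write out a separate proof for the $D_1$ version, treating it as immediate from the earlier one, so your write-up matches the paper's (implicit) approach.
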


Observe that the image of the disk $D_1$ under the transformation $x \longrightarrow (1+\mu(x-1))$
is $D_{\mu}(1-\mu)$.    Just as before we obtain the following relaxation:

\begin{quote}
The optimal sensitivity $\sigma_n(\mu,\varepsilon)$ is at most  the maximum of  $p(0)-\varepsilon \|p\|_{\sup}^{D_1}$
over all degree $n$ polynomials $p(x)$ such that $\|p\|_{\sup}^{D_{\mu}(1-\mu)} \leq 1$
\end{quote}

As we will see,  there are no bad polynomials for this relaxation.
In hindsight, it is not surprising that the values of the polynomial $p(x)$ over the whole complex disk reveals 
much more information than just the values on $[-1,1]$; in particular, we can recover the values of a polynomial from integrating around the circle, so a polynomial cannot stay too small on the boundary of the disk if it is large at the origin.
In particular the polynomial $(x^2 -1)^{n/2}$ that was bad for the
$\| \cdot \|^{[-1, 1]}_{\sup}$ relaxation is no longer bad because its maximum (on $D_1$) is attained at $x=\imath$ and is exponentially large. 

In the next section we will prove:

\begin{lemma}
\label{holo}
Let $h$ be a holomorphic function and suppose $D=D_{\rho}(\beta)$ is a disk contained in $D_1$
such that $\|h\|_{\sup}^D \leq 1$.  Then there is a point $x \in C_1$
such that $|h(x)| \geq |h(0)|^{1+d}$, where $d=(1-|\beta|)/\log(2/\rho)$.
\end{lemma}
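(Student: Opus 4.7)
The plan is to reduce to Hadamard's three-circle theorem after moving $\beta$ to the origin via a M\"obius automorphism of $D_1$. First I dispatch two trivial cases. If $|h(0)| \leq 1$, then $|h(0)|^{1+d} \leq |h(0)| \leq \max_{C_1}|h|$ by the maximum modulus principle (since $d \geq 0$), so any $x \in C_1$ attaining the maximum works; this includes the situation $|\beta| \leq \rho$, where $0 \in D$ forces $|h(0)| \leq 1$. By replacing $h(z)$ with $h(e^{i\arg\beta}z)$, which preserves both $|h(0)|$ and $\max_{C_1}|h|$ and moves $D$ to $D_\rho(|\beta|)$, I may also assume $\beta \geq 0$ is real. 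So the remaining case is $0 < \rho < \beta < 1$ with $|h(0)| > 1$.

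Now set $\phi(z) = (z-\beta)/(1-\beta z)$, the automorphism of $D_1$ sending $\beta \to 0$, and let $g = h \circ \phi^{-1}$, so $g$ is holomorphic on $D_1$. Since $\phi$ maps $C_1$ to $C_1$, $\max_{C_1}|g| = \max_{C_1}|h|$, and since $\phi^{-1}(-\beta) = 0$ we have $g(-\beta) = h(0)$. A direct computation shows $w \in \phi(D_\rho(\beta))$ if and only if $(1-\beta^2)|w| \leq \rho|1+\beta w|$, and the triangle inequality $|1+\beta w| \geq 1-\beta|w|$ gives the sufficient condition $|w| \leq \rho' := \rho/(1-\beta^2+\beta\rho)$. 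Hence $|g| \leq 1$ on the concentric disk $D_{\rho'}(0)$.

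Next I apply Hadamard's three-circle theorem to $g$ on the annulus $\rho' \leq |w| \leq 1$; the requirement $\rho' \leq \beta$ is equivalent to $\rho \leq \beta$, which holds. With $M_g(r) = \max_{|w|=r}|g(w)|$, the theorem gives
\[
M_g(\beta)^{\log(1/\rho')} \;\leq\; M_g(\rho')^{\log(1/\beta)} \cdot M_g(1)^{\log(\beta/\rho')},
\]
and using $M_g(\rho') \leq 1$, $M_g(\beta) \geq |g(-\beta)| = |h(0)|$, and $M_g(1) = \max_{C_1}|h|$, taking logarithms (legitimate since $|h(0)|>1$) yields $\log\max_{C_1}|h| \geq \log|h(0)| \cdot \log(1/\rho')/\log(\beta/\rho')$. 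The exponent equals $1 + \log(1/\beta)/\log(\beta/\rho')$, so it suffices to show $\log(1/\beta)/\log(\beta/\rho') \geq d$.

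This last step is elementary. The standard inequality $\log(1/\beta) \geq 1-\beta$ handles the numerator, while $\log(\beta/\rho') = \log\bigl(\beta(1-\beta^2+\beta\rho)/\rho\bigr) \leq \log(2/\rho)$ because $\beta(1-\beta^2+\beta\rho) \leq \beta + \beta^2\rho \leq 2$ (using $\beta, \rho \leq 1$). Combining gives $\log(1/\beta)/\log(\beta/\rho') \geq (1-\beta)/\log(2/\rho) = d$, as required. The main worry going in was whether the simple M\"obius choice (sending $\beta$ to $0$, rather than the more delicate one that maps $D_\rho(\beta)$ to a \emph{perfectly} concentric disk around $0$) would leave the radius $\rho'$ too large to absorb into $\log(2/\rho)$; it is a pleasant feature that the crude triangle-inequality estimate produces exactly the constant $2$ that the lemma demands.
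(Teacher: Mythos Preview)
Your proof is correct and follows essentially the same route as the paper: compose $h$ with the M\"obius automorphism of $D_1$ moving $\beta$ to the origin, apply Hadamard's three-circle theorem on concentric circles of radii $(\text{inner}),|\beta|,1$, and finish with the elementary estimates $\log(1/\beta)\geq 1-\beta$ and an upper bound on the denominator by $\log(2/\rho)$. The only cosmetic differences are that you handle the trivial case $|h(0)|\leq 1$ and the rotation to real $\beta$ explicitly, and you take the inner radius to be $\rho'=\rho/(1-\beta^2+\beta\rho)$ (from the triangle inequality $|1+\beta w|\geq 1-\beta|w|$) rather than the paper's cruder $\rho/2$ (from $|\phi^{-1}(w)-\beta|\leq 2|w|$); both choices collapse to the same final bound after the estimate $\beta/\rho'\leq 2/\rho$.
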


From this uncertainty principle, we can now prove Theorem \ref{sigma bound}.

\vspace{0.5pc}

\noindent
{\bf Proof of Theorem \ref{sigma bound}. }
We use the bound from the $D_1$-sup relaxation.
Let $p$ be a polynomial
satisfying the constraints and let $s=|p(0)|$.  Then $p$ satisfies the conditions of Lemma \ref{holo} with $\beta=1-\mu$ and  $\rho=\mu$.
Therefore $\|p\|_{\sup}^{D_1} \geq |s|^{d+1}$, where $d$ is as in the lemma.  
From this we conclude that the
objective function in the $\| \cdot \|^{D_1}_{\sup}$ relaxation is at most $s-\varepsilon s^d$ which is maximized when
$s=(1/(d+1)\varepsilon)^{1/d}$, and this quantity is itself an upper bound on the objective function.
We can therefore conclude that $\sigma_n(\varepsilon,\mu) \leq (1/\varepsilon)^{1/d}$ where the exponent
is equal to $\frac{1}{\mu}\log\frac{2}{\mu}$.

\section{Proof of Lemma \ref{holo}}

Here we will prove the uncertainty principle stated in the previous section using tools from complex analysis. Perhaps one of the most useful theorems in understanding the rate of growth of holomorphic functions in the complex plane is Hadamard's Three Circle Theorem (and the related Three Lines Theorem):

\begin{theorem} \cite{H}
\label{Hadamard}
Let $0 < a \leq b \leq c$  and let  $g(x)$ be holomorphic function on the $D_{c}$.   
Then 
$$\log \frac{c}{a}\log \|g\|_{\sup}^{C_b} \leq \log \frac{c}{b} \log \|g\|_{\sup}^{C_a}+ \log \frac{b}{a} \log \|g\|_{\sup}^{C_c}.$$
\end{theorem}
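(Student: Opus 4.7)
The plan is to reduce the stated inequality to the claim that $\log M(r)$ is a convex function of $\log r$, where $M(r) := \|g\|_{\sup}^{C_r}$ for $a\le r\le c$. Writing $L = \log(c/a)$, $L_1 = \log(b/a)$, $L_2 = \log(c/b)$, so that $L = L_1+L_2$, the inequality to be proved is exactly
$$L\,\log M(b) \;\le\; L_2\,\log M(a) + L_1\,\log M(c),$$
which is the statement that $\log M$ is convex as a function of $\log r$ on $[a,c]$. So the entire task is to prove this convexity.

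The main tool will be the maximum principle for harmonic (and more generally subharmonic) functions on the closed annulus $A := \{z : a \le |z| \le c\}$. First I will assume, as a reduction, that $g$ has no zeros on $A$; the general case will be recovered afterwards by a limiting argument (either shrink the annulus slightly to avoid the finitely many zeros inside $D_c$, or use that $\log|g|$ is subharmonic even where $g$ vanishes, so the maximum principle still applies on annular boundaries). Under the no-zero assumption, $\log|g(z)|$ is harmonic on $A$, and $\log|z|$ is harmonic there as well (since $0\notin A$). Hence for every real parameter $\alpha$, the function
$$u_\alpha(z) \;:=\; \log|g(z)| + \alpha \log|z|$$
is harmonic on $A$.

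The key step is to choose $\alpha$ so that the a priori bounds for $u_\alpha$ on the two boundary circles $C_a$ and $C_c$ coincide. On $C_r$ one has $u_\alpha(z)\le \log M(r) + \alpha \log r$, so I set
$$\alpha \;=\; \frac{\log M(a) - \log M(c)}{\log c - \log a},$$
which makes $\log M(a) + \alpha\log a = \log M(c)+\alpha\log c$; call this common value $K$. By the maximum principle applied to the harmonic $u_\alpha$ on $A$, $u_\alpha(z) \le K$ for every $z\in A$. Taking $z\in C_b$ with $|g(z)|=M(b)$ gives $\log M(b) + \alpha \log b \le K = \log M(a)+\alpha\log a$. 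Substituting the value of $\alpha$ and multiplying through by $L=\log(c/a)>0$ is a routine algebraic rearrangement that yields exactly $L\log M(b)\le L_2\log M(a)+L_1\log M(c)$.

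The main obstacle I anticipate is the reduction to the zero-free case, since the choice $u_\alpha = \log|g| + \alpha\log|z|$ is only harmonic where $g$ does not vanish. I expect to handle this cleanly by noting that $\log|g|$ is subharmonic everywhere on $D_c$ (taking the value $-\infty$ at zeros), and the maximum principle for subharmonic functions on the compact annulus $A$ still forces $u_\alpha(z)\le \max(\sup_{C_a} u_\alpha, \sup_{C_c} u_\alpha)$ for all $z\in A$. Edge cases (such as $M(a)=0$, or $a=b$, or $b=c$) are either trivial or reduce immediately to the statement $g\equiv 0$ by the identity theorem. Aside from this subharmonicity bookkeeping, the proof is a single application of the maximum principle with the right harmonic comparison function.
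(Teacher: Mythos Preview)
Your proof is correct and is essentially the standard textbook argument for Hadamard's theorem (reduce to log-convexity of $M(r)$ in $\log r$, then apply the maximum principle to the subharmonic function $\log|g(z)|+\alpha\log|z|$ with $\alpha$ chosen to equalize the boundary bounds). There is nothing to compare it against: the paper does not supply its own proof of this statement but simply quotes it as a classical result with a citation to \cite{H}, using it as a black box in the proof of Lemma~\ref{holo}.
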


In Lemma \ref{holo} we do {\em not} have three concentric circles but we can apply a M\"obius transformation to
put the problem in the right form.   Let $\beta$ be the center of the disk $D$ in the lemma
and consider the transformation $\phi(x)=\phi_{\beta}(x)=\frac{\beta+x}{1+ \beta^* x}$, where $(\cdot)^*$ denotes
complex conjugate.   The following fact is well known and easy to check:

\begin{fact}
\label{mobius}
For $|\beta| < 1$, $\phi_{\beta}$ is a holomorphic function which maps $D_1$ to itself.
\end{fact}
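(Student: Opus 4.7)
The plan is to verify both assertions directly. For holomorphicity, I would note that $\phi_\beta$ is a rational function whose only possible singularity is at the zero of the denominator, $x = -1/\beta^*$. When $|\beta| < 1$, this point satisfies $|-1/\beta^*| = 1/|\beta| > 1$, so it lies outside the closed unit disk. Hence $\phi_\beta$ is holomorphic on an open neighborhood of $D_1$.

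For the claim that $\phi_\beta$ maps $D_1$ into $D_1$, I would compute $|\phi_\beta(x)|^2$ for arbitrary $x$ with $|x| \le 1$ and show it is at most $1$. The key identity is obtained by expanding:
\begin{align*}
|1 + \beta^* x|^2 - |\beta + x|^2 &= (1 + \beta^* x)(1 + \beta x^*) - (\beta + x)(\beta^* + x^*) \\
&= 1 + \beta^* x + \beta x^* + |\beta|^2|x|^2 - |\beta|^2 - \beta x^* - \beta^* x - |x|^2 \\
&= (1 - |\beta|^2)(1 - |x|^2).
\end{align*}
Dividing by $|1 + \beta^* x|^2$ (which is strictly positive on $D_1$ by the holomorphicity argument) yields
\[
1 - |\phi_\beta(x)|^2 = \frac{(1 - |\beta|^2)(1 - |x|^2)}{|1 + \beta^* x|^2}.
\]
Since $|\beta| < 1$, the right-hand side is nonnegative whenever $|x| \le 1$, so $|\phi_\beta(x)| \le 1$. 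This establishes that $\phi_\beta(D_1) \subseteq D_1$.

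There is no real obstacle here: the whole statement reduces to the standard Blaschke-factor computation above. The only thing to be mildly careful about is the location of the pole at $-1/\beta^*$, which is what makes the denominator nonvanishing on the closed unit disk and therefore guarantees holomorphicity there. (As a side observation that may be useful later, the identity also shows $|\phi_\beta(x)| = 1$ precisely when $|x| = 1$, so $\phi_\beta$ carries $C_1$ to $C_1$, which is the feature needed to apply the three-circle theorem after the change of variables.)
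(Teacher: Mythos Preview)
Your argument is correct and is exactly the standard Blaschke-factor computation one would use to verify this fact. The paper itself does not give a proof of Fact~\ref{mobius}; it simply declares the statement ``well known and easy to check'' and moves on to the three enumerated properties that follow. So there is nothing to compare: you have supplied precisely the elementary verification the paper omits, and your side observation that $\phi_\beta$ maps $C_1$ onto $C_1$ is indeed the first of those three subsequent properties.
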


\begin{enumerate}
\item $\phi(C_1)=C_1$.
\item $0 \in \phi(C_{|\beta|})$.
\item $\phi(C_{\rho/2}) \subseteq D= D_{\rho}(\beta)$.
\end{enumerate}

The first claim is a standard fact about M\"obius transformations.  The second follows from $\phi(-\beta)=0$.
For the third,

$$|\phi(x)-\beta| = \Big |\frac{\beta+x}{1+\beta^* x}-\beta \Big | = \Big |\frac{x(1-|\beta|^2)}{1+\beta^* x} \Big | \leq |x|\frac{1-|\beta|^2}{1-|\beta|}=|x|(1+|\beta|)\leq 2|x|.
$$

Now consider the function $g$ defined on $D_1$ by $g(x)=h(\phi(x))$.  From the three previous observations we have:

\begin{enumerate}
\item $g(C_1)=h(C_1)$ and so $\|g\|_{\sup}^{C_1}=\|h\|_{\sup}^{C_1}$.
\item $h(0) \in g(C_{\beta})$ so $\|g\|_{\sup}^{C_{\beta}} \geq |h(0)|$.
\item $g(C_{\rho/2}) \subseteq h(D)$ so $\|g\|_{\sup}^{C_{\rho/2}}| \leq \|h\|_{\sup}^{D} \leq 1$,  by
the hypothesis of the lemma. 
\end{enumerate} 
Applying Theorem \ref{Hadamard} with 
$a=\rho/2$, $b=|\beta|$ and $c=1$ we get:

$$\log \frac{2}{\rho} \|g\|_{\sup}^{C_{|\beta|}} \leq \log \frac{1}{\beta} \log \|g\|_{\sup}^{C_{\rho/2}}+ 
\log \frac{2\beta}{\rho} \log \|g\|_{\sup}^{C_1},$$
which when combined with the three previous bounds gives:
$$\log \frac{2}{\rho} |h(0)| \leq 
\log \frac{2\beta}{\rho} \log \|h\|_{\sup}^{C_1},$$
from which we conclude:
$$
\|h\|_{\sup}^{C_1} \geq |h(0| ^t,
$$
where $t= \log(\frac{2}{\rho})/\log \frac{2\beta}{\rho} = 1+ \log(1/\beta)/\log(2\beta/\rho) \geq 1+ (1-\beta)/\log(2/\rho)$,
which is the parameter $d$ defined in the lemma.

\section{Open Question}\label{sec:open}

Is there a polynomial time algorithm for noisy population recovery -- i.e. when attributes are not deleted, but are flipped (with probability $\eta < 1/2$)? It seems that new ideas are needed to handle this case in part because if we try the same method of writing a linear program over a basis of estimators, then instead of two polynomials related by an affine change of variables, we get two polynomials $p(x)$ and $q(x)$ for which $p(x) = \ell(x)^n q(\phi(x))$ where $\ell(x)$ is a linear function and $\phi(x)$ is a M\"obius transformation. However this damping term $\ell(x)^n$ makes it much easier for $q(x)$ to be bounded in the complex disk.

\end{document}